\def\ie{\textit{i.e.}\xspace}
\def\eg{\textit{e.g.}\xspace}
\def\BibTeX{{\rm B\kern-.05em{\sc i\kern-.025em b}\kern-.08em
    T\kern-.1667em\lower.7ex\hbox{E}\kern-.125emX}}
\newtheorem{theorem}{Theorem}
\theoremstyle{definition}
\newtheorem{definition}[theorem]{Definition}
\theoremstyle{remark}
\begin{document}
%
\title{VeriML: Enabling Integrity Assurances and Fair Payments for Machine Learning as a Service}

\author{\IEEEauthorblockN{Lingchen Zhao}
\IEEEauthorblockA{Wuhan University\\
lczhaocs@whu.edu.cn}
\IEEEauthorblockN{Chao Shen}
\IEEEauthorblockA{Xi'an Jiaotong University\\
chaoshen@xjtu.edu.cn}

\and
\IEEEauthorblockN{Qian Wang}
\IEEEauthorblockA{Wuhan University\\
qianwang@whu.edu.cn}

\IEEEauthorblockN{Xiaodong Lin}
\IEEEauthorblockA{University of Guelph\\
xlin08@uoguelph.ca}
\and

\IEEEauthorblockN{Cong Wang}
\IEEEauthorblockA{City University of Hong Kong\\
congwang@cityu.edu.hk}

\IEEEauthorblockN{Shengshan Hu}
\IEEEauthorblockA{Wuhan University\\
hushengshan@whu.edu.cn}

\and
\IEEEauthorblockN{Qi Li}
\IEEEauthorblockA{Tsinghua University\\
qli01@tsinghua.edu.cn}

\IEEEauthorblockN{Minxin Du}
\IEEEauthorblockA{The Chinese University of Hong Kong\\
dm018@ie.cuhk.edu.hk}
}

\maketitle

\begin{abstract}
Machine Learning as a Service (MLaaS) allows clients with limited resources to outsource their expensive ML tasks to powerful servers.
Despite the huge benefits, current MLaaS solutions still lack strong assurances on:
1) service correctness (\ie, whether the MLaaS works as expected);
2) trustworthy accounting (\ie, whether the bill for the MLaaS resource consumption is correctly accounted);
3) fair payment (\ie, whether a client gets the entire MLaaS result before making the payment).
Without these assurances, unfaithful service providers can return improperly-executed ML task results or partially-trained ML models while asking for over-claimed rewards.
Moreover, it is hard to argue for wide adoption of MLaaS to both the client and the service provider, especially in the open market without a trusted third party.

In this paper, we present VeriML, a novel and efficient framework to bring integrity assurances and fair payments to MLaaS.
With VeriML, clients can be assured that ML tasks are correctly executed on an untrusted server and the resource consumption claimed by the service provider equals to the actual workload.
We strategically use succinct non-interactive arguments of knowledge (SNARK) on randomly-selected iterations during the ML training phase for efficiency with tunable probabilistic assurance. We also develop multiple ML-specific optimizations to the arithmetic circuit required by SNARK. Our system implements six common algorithms: linear regression, logistic regression, neural network, support vector machine, K-means and decision tree. The experimental results have validated the practical performance of VeriML.
\end{abstract}

\section{Introduction}

Machine learning (ML) has been widely used in a variety of fields such as disease diagnosis, risk prediction, and pattern recognition. Since ML tasks often tackle with massive data, especially in the training procedure, they require servers with strong computational capabilities. As a result, Machine Learning as a Service (MLaaS) has become a promising service paradigm that enables weak clients to train ML models or compute predictions in powerful cloud infrastructures.

Despite the well-understood benefits, there exist many serious concerns regarding MLaaS practices. For wider adoption of MLaaS, especially in the open market without a trusted third party, we argue that the following three assurances are among the most important key desirables that MLaaS must satisfy.

- \textbf{Service correctness}:  The client needs to be ensured that the ML tasks done by the service provider must work as intended, as if they were done in-house, and always produce correct ML predictions or correct trained ML models.

- \textbf{Trustworthy accounting}: In existing commercial MLaaS platforms, the bill is normally based on the consumed computing resources~\cite{AWS,Azure}. Thus, the lack of full transparency at the service provider demands a strong assurance that the resource consumption claimed by the service provider indeed corresponds to the actual workload.

- \textbf{Fair payment}: The client should not obtain anything about the final MLaaS result (\eg, partial ML prediction result, sub-optimally trained ML models, etc.) prior to the payment. The fairness of exchange between the client's payment and the MLaaS result must be guaranteed.

Without these assurances, unfaithful service providers could return improperly-executed ML task results or partially-trained ML models while asking for over-claimed rewards. When designing schemes with strong assurance, we have to take into account all of the requirements together, so as to defeat sophisticated yet economically-incentivized attacks.
For example, the workload to train a model usually cannot be determined in advance since when the training converges is usually unpredictable.
The convergence depends on the training data set, the learning parameter setting, convergence conditions, and random factors in the nature of model training.
In this case, the server may claim that it has executed 10K iterations to train a model, but in fact it may have only executed 1K iterations. The client can only observe that the model is ``well-formed'', \ie, its dimension is consistent to the requirement, but doesn't know if such a model is actually produced by the 10K iterations.
Besides, a malicious server may also return inaccurate results by using a simpler model with fewer parameters, or even generating them randomly.
Therefore, verifying the correctness of the result as well as trustworthy accounting is vital and highly desirable.


In the literature, \emph{Verifiable Computation} (VC) is a powerful cryptographic building block which aims to verify the output of a deterministic function.
The worker can produce a proof of his computation by representing the function as a circuit.
An early attempt~\cite{ghodsi2017safetynets} invoked a common VC technique called interactive proof protocol to verify the prediction results of neural networks.
Yet, it does not address the cheating problem in MLaaS.
Firstly, although the VC technique can provide the zero-knowledge property, the server needs to reveal the results to the client for verification, which violates our premise of fair exchange.
Secondly, expressing the training process as a circuit straightforwardly is inefficient, due to the large number of training iterations and input variables.
Moreover, the matrix computations and non-linear functions in ML algorithms are also expensive for a circuit.

Hence, our goal is to design a fair machine learning service system which can ensure the correctness of the result with high efficiency. To transform this high-level concern into a practical system, we need to address three main challenges:
\begin{itemize}
  \item[1)] Because existing VC techniques need to express the computation as a circuit, how to efficiently convert the machine learning algorithms to circuits  is crucial for fast verification. We find that prior works in this regard are all unsuitable for VC, since their methods focus on the boolean circuit which is expensive in calculating  large numbers of multiplications~\cite{liu2017oblivious},~\cite{mohassel2017secureml}.

  \item[2)] How to avoid the expensive cost when loading all the iterations and input variables in the circuit? An intuitive idea is to sample and verify a small part of the iterations. However, naive sampling allows a malicious server to only execute the sampled iterations and ignore the remaining ones. Besides, due to the sequential nature of the training, dependencies cross training iterations must be taken into account to ensure the correct check of the sampled iterations.

  \item[3)] Ensuring that neither of the two parties can get advantages in the payment is essential to a practical MLaaS. Introducing a third party has been proved to be necessary, but this may cause additional trust concerns. Therefore prior works have proposed using blockchain to replace the third party~\cite{maxwell2011zero, campanelli2017zero, dziembowski2018fairswap}. However, directly invoking the VC technique in their protocols still presents risks of violating our fair exchange premise, where the training results from the server need to be revealed to the client for verification.
\end{itemize}

In this paper, we present VeriML, a practical outsourced machine learning framework, which can achieve integrity assurance and fair payment in training ML models, and  detect misbehaviors with high probability. Different from prior arts on verifiable computation, the client of VeriML can be assured of not only correct execution of ML tasks on the untrusted server, but also the fact that the claimed resource consumption by the service provider indeed corresponds to the actual workload. Unlike the prior method in~\cite{ghodsi2017safetynets}, we build VeriML by leveraging the generic VC technique--succinct non-interactive arguments of knowledge (SNARK), for its lower computation and communication costs for the client, better expressiveness, and zero-knowledge property~\cite{ben2013snarks,parno2013pinocchio}. Rather than directly applying SNARK to the entire ML task that often yields impractical costs, we focus on using SNARK only for randomly-selected individual iterative training procedures of ML for efficiency with tuneable probabilistic assurance.
To this end, we design a new commit-and-prove protocol, which avoids the expensive computation cost in proving  large numbers of loop iterations by sampling. Its correctness is ensured by our commitment design that preserves complex dependencies between different iterations without information leakage.

As case studies of VeriML, we focus on six typical machine learning algorithms: linear regression, logistic regression, neural network, SVM, K-Means and decision tree. To reduce the computation cost in generating the proofs, we propose multiple optimizations to construct more efficient arithmetic circuits required by SNARK. To the best of our knowledge, VeriML is the first solution that achieves fairness and correctness in outsourced machine learning services. Our contributions can be summarized as follows:
\begin{itemize}
  \item We present VeriML, the first outsourced machine learning system that can exchange results of a paid service fairly and detect computation misbehaviors  with high probability.
  \item To adapt machine learning algorithms to the VC technique, we design multiple circuit-friendly optimizations to verify expensive operations.
  \item We develop a new commit-and-prove protocol that can verify the loop iterations when training machine learning models with high efficiency. The detailed theoretic analysis demonstrates that our scheme can detect incorrect computations with high probability.
  \item We implement the VeriML framework with six popular ML algorithms: linear regression, logistic regression, neural network, support vector machine, K-means and decision tree on four real-world datasets to demonstrate its performance. Our results show that VeriML has a practical overhead as well as a negligible accuracy loss in training.
\end{itemize}

\section{Related Work}
\textbf{Verifiable Computation.} Generally, verifiable computation is commonly used to verify the correctness of a function without re-executing it. Previous studies have been focused on three mainstreams: authenticated data structures (ADS), interactive proofs (IPs) and succinct non-interactive arguments of knowledge (SNARK). Among them, ADS has limited expressiveness and lacks the zero-knowledge property, which can only process certain computations such as polynomial evaluation~\cite{papadopoulos2015practical} and graph queries~\cite{zhang2014alitheia}. IPs are implemented based on the sum-check protocol~\cite{lund1990algebraic}. They can solve practical problems such as multiplication matrix~\cite{thaler2013time} and SQL query~\cite{zhang2017vsql} with high efficiency by avoiding expensive cryptographic operations. But it is complex to convert an IP protocol to zero knowledge~\cite{cramer1998zero, chiesa2017zero, wahby2015efficient} as needed by us. Compared to the above two solutions, SNARK transforms an arbitrary polynomial-sized function to a circuit to produce a short proof~\cite{costello2015geppetto, wahby2015efficient, kosba2018xjsnark}, and its succinct property makes it very suitable for weak clients. SNARK supports zero-knowledge proof, and has rich expressiveness, while the high cost at the prover's side can be alleviated by the powerful cloud.

Recently, verification frameworks without using the generic VC techniques have also been proposed. For example, TrueBit~\cite{truebit} presents a novel peer-review idea which introduces smart contract to judge the correctness of the results. However, achieving the judge contract needs the blockchain to store the trained models provided by multiple verifiers, which may cause expensive costs.
Using trusted execution environment (TEE) such as Intel SGX is another orthogonal attempt, which might be complementary to our protocol~\cite{ohrimenko2016oblivious, tramer2018slalom}. Using TEE demands additional trust of hardware vendors in the first place, which does not always hold in MLaaS. Besides, the limited size of the current enclave may cause expensive overhead in secure I/Os and encryptions/decryptions, especially when addressing large-scale ML training tasks.

\textbf{Privacy-preserving Machine Learning.} Prior works on privacy-preserving ML over encrypted data usually adopted homomorphic encryption (HE) and garbled circuits (GCs). These techniques are similar to VC as they also need to represent the computation task. To our best knowledge, CryptoNets~\cite{gilad2016cryptonets} was the first to use neural networks to compute predictions on encrypted data, and subsequent works following this direction have further obtained results with improved performance and/or accuracy~\cite{liu2017oblivious,mohassel2017secureml,mohassel2018aby}.
The main difference between these works and ours lies in designing the circuit of machine learning algorithms, as we target at verifying the result. The most related work to ours is SafetyNets~\cite{ghodsi2017safetynets}, which verified the correctness of the prediction of neural networks using IPs. However, SafetyNets assumes that both the server and the client are aware of the model and results during the whole process (including prediction and verification), and does not concern the training phase. Therefore, we cannot leverage SafetyNets to implement the training process with integrity assurances and/or realize the fair exchange.

\textbf{Blockchain for Fairness.} Recently using blockchain to ensure fairness has  been broadly studied. Most of the related works aim to achieve fairness in multi-party computations by using timed commitment and garbled circuits such as~\cite{andrychowicz2014secure, bentov2014use, kumaresan2014use}, which are not suitable for our outsourced scenario. So far, zero-knowledge contingent payment (ZKCP)~\cite{maxwell2011zero} has been accepted as an elegant solution that uses zero-knowledge proof to ensure fairness in trading. Subsequent works~\cite{fuchsbauer2018subversion, campanelli2017zero, dziembowski2018fairswap} discussed the drawbacks of ZKCP when the buyer is malicious, or the purchase is digital service rather than goods, and also the expensive computation cost of verifying a large file. All the existing works still only concerned about how to prove ``if the seller holds something actually'', but without ensuring ``if what the seller holds is correct''.

\section{Preliminaries}
\subsection{Verifiable Computation (VC)}
The verifiable computation (VC) technique aims to enable a client $C$ to verify the correctness of function $F$ executed by server $S$, with a given input $x$.

For an outsourced task, the client first runs $\mathsf{KeyGen}$ to generate an evaluation key $EK_F$, and a verification key $VK_F$. It then sends  $EK_F$ to the server. The server executes $\mathsf{Prove}$ to produce the proof $\pi$, and sends $\pi$ and the result $y$ to the client. The client then checks the proof to verify the correctness of $y$ by executing $\mathsf{Verify}$ on the input $x$. The server and the client are referred to as the \emph{prover} and the \emph{verifier} respectively.

Technically, the implementation of proof generation relies on SNARK. Its key point is to encode the user-defined computations as quadratic programs. The basic flow is to first compile the program from a high-level language to an arithmetic circuit\footnote{We do not consider boolean circuit for verification because the numerous multiplications in machine learning algorithms make them more suitable for arithmetic circuits.}, then use the circuit to construct a Quadratic Arithmetic Program (QAP) which includes three sets of polynomials $A:=\{A_i(x)\}^m_{i=0}, B:=\{B_i(x)\}^m_{i=0}, C:=\{C_i(x)\}^m_{i=0}$ and a target polynomial $Z(x)$. Defining polynomial $P(x) = A(x)B(x) - C(x)$, and $Z(x)$ divides $P(x)$ iff $(c_1,\ldots, c_k)$ is a valid assignment for the circuit. The worker constructs $P(x)$ for the proof $\pi$, and the client can verify the correctness by checking if $Z(x)$ can divide $P(x)$. The \emph{zero-knowledge} property can be easily drawn into SNARK with a negligible overhead by choosing three additional random samples $\delta_1, \delta_2, \delta_3$ and adding $\delta_1Z(x), \delta_2Z(x), \delta_3Z(x)$ in the exponent to $A(x), B(x)$ and $C(x)$, respectively. The reader may refer to~\cite{gennaro2013quadratic} for more details.


\subsection{Fair Exchange Using Blockchain}\label{Fairness}
In a computation service, the \emph{fair exchange} problem is how to guarantee that the transaction between a seller $S$ and a buyer $B$ can be conducted fairly without one party cheating the other~\cite{maxwell2011zero, campanelli2017zero}. Fair exchange ensures: 1) the buyer who pays a potentially malicious seller can obtain the results; 2) the seller who delivers the results to a potentially malicious buyer can get paid.

Traditionally, a trusted third party is proved to be indispensable in attaining fair exchange~\cite{cleve1986limits}. It is, however, usually a serious limitation due to the lack of availability of a trusted third party. The emergence of blockchain makes it possible to achieve fair exchange without a trusted third party. Moreover, it is desirable that $B$ does not learn any knowledge about the goods except for what $S$ has published. To this end, Zero-Knowledge Contingent Payment (ZKCP) combines the hash-locked transaction and zero-knowledge proof technology. In the protocol, $S$ encrypts the file $f$ with the symmetric key $k$, and sends both $c=\mathsf{Enc}_k(f)$ and $s=\mathsf{SHA256}(k)$ with a proof $\pi$ to $B$. If $\pi$ can pass the verification algorithm run by $B$, $S$ can prove that $k$ is actually the key of $c$ and the preimage of $s$. Then $B$ posts a transaction on the blockchain to pay to anyone who can provide a preimage of $s$ to obtain $k$, and $S$ can obtain the money by presenting the preimage.

\section{Problem Statement}

\subsection{Definition}
In our system, a client $C$ outsources a machine learning task to a server $S$ with a training dataset $D$.
$S$ trains a prediction model $M$ according to a certain ML algorithm and parameters.
After the training phase, $C$ submits challenges $r$ to verify the execution of the learning algorithm, and in turn,
$S$ returns the corresponding proofs $\pi$ without providing $M$.
If all the proofs can pass the verification, $C$ is convinced that $S$ has faithfully completed the training, and then pays for the ML service to obtain the model $M$.
The core functionalities of our scheme are defined below.

\begin{definition}\label{Model Definition}
  \emph{A fair machine learning service system allowing a client $C$ to outsource the training algorithm $F$ and a dataset $D$ to the server $S$ is a tuple of five algorithms:}
\begin{itemize}
  \item $(EK_F, VK_F)\leftarrow\mathsf{KeyGen}(1^\lambda)$: \emph{is a probabilistic algorithm that takes as input a security parameter $\lambda$ and outputs a public evaluation key $EK_F$ and a verification key $VK_F$}.
  \item $(r, \mathbf{I})\leftarrow\mathsf{Compute}(D, F)$: \emph{is a deterministic algorithm that takes as input machine learning algorithm $F$ and a dataset $D$ and outputs the learning results $r$ and a commitment $\mathbf{I}$}.
  \item $(\pi, \mathbf{I}')\leftarrow\mathsf{Prove}(EK_F, F, x)$: \emph{is a deterministic algorithm that takes as input $EK_F$, an algorithm $F$, and data $x$, and outputs the corresponding proof $\pi$ and auxiliary information $\mathbf{I}'$}.
  \item $({0, 1})\leftarrow\mathsf{Verify}(VK_F, x, \mathbf{I}', \pi)$: \emph{is a deterministic algorithm that takes as input $VK_F$ and outputs $1$ if $F(x)=\mathbf{I}'$ and $\mathbf{I}' = \mathbf{I}$; $0$, otherwise.}
  \item $({0, 1})\leftarrow\mathsf{Payment}(c, u)$: \emph{is a deterministic algorithm that takes as input a rule $u$ and a condition $c$, and outputs $1$ if $c$ satisfies $u$; $0$, otherwise.}
\end{itemize}
\end{definition}

We say that VeriML is a secure protocol if the following properties are satisfied.
\begin{itemize}
  \item \textit{Completeness.} The probability that $\mathsf{Payment}$ outputs $1$ (Accept) is 1 if $\mathsf{Verify}$ outputs $1$ (Accept).
  \item \textit{Soundness.} The probability that $\mathsf{Verify}$ outputs $1$ (Accept) is less than $2^{-l}$ if $S$ does not follow the protocol, where $l$ is the bit length of the inputs.
  \item \textit{Fairness.} $C$ learns the witness iff he pays the fee, and $S$ gets paid iff he has the correct result.
\end{itemize}

The key problem to guarantee fairness between $S$ and $C$ in this outsourced computation service is to efficiently verify the result correctness.
The definitions of correctness, security and efficiency of verifiable computations~\cite{gennaro2013quadratic} are inherited.

\subsection{Threat Model}
We consider a generic setting for a cloud-based ML service, where a client $C$ uploads a training dataset $D$ to a server $S$, who runs the ML algorithm to train a model $M$. We assume that $S$ is malicious but rational, \ie, it may deviate from the protocol only if some additional economic benefits can be earned. Specifically, $S$ may cut back on the training process to save computation and storage costs. And we assume that if $S$ has indeed executed the task correctly, it will not deliver a fake model after the verification. In Section~\ref{Security}, we will discuss how to relax this assumption by verifying the performance of the delivered model. $C$ is considered to be honest-but-curious, \ie, it may try to learn the trained model $M$ before the payment. Obviously, \textit{verifiability} is critical to the system, \ie, the client is allowed to verify the correctness of the trained model without knowing any information about it.

The main purpose of our work is to solve the problem in verifying the integrity of ML model training, \ie, proving that $S$ has actually executed the specified computation task. Beyond that, we also aim to ensure the integrity of prediction services provided by $S$ without revealing the ML model to $C$. In the literature, there exist a series of works that study how to obtain information about the model or the server's training data by observing the prediction results, such as model inversion attack~\cite{fredrikson2015model}, model extraction attack~\cite{tramer2016stealing} and membership inference attack~\cite{shokri2017membership}. These studies are outside the scope of this work.

\section{Overview of VeriML}
In this section, we first discuss the main design challenges, followed by an overview of our VeriML system.

\subsection{Challenges}
A naive solution to the problem of verifiable outsourced ML is to invoke the existing VC protocols. The client $C$ can construct the circuit that covers the whole learning process, and generate the corresponding key pairs. The server $S$ evaluates the circuit and produces the proof. However, such a naive solution is infeasible in practice since ML algorithms usually require a large number of sequential loop iterations, and it is difficult to be represented by the circuit. Moreover, each iteration mainly consists of matrix multiplications and non-linear functions, which are extremely expensive for verification.

A straightforward modification to the above solution may be sampling several iterations for verification. A large enough sample size with uniformly-selected samples may help
detect incorrect iterations with high probability. However, due to the nature of training, it is difficult to ensure that the proved iterations are the same as required. For instance, the client may request the proof for the 10K-th iteration, but the malicious server might still be able to cheat by executing an iteration with an arbitrary input. This means that whether the produced proof is actually corresponding to the 10K-th iteration needs be further proved.

Another limitation of existing VC methods is that they require the server to reveal the trained model to the client for proof verification. This allows the client to obtain the trained model without payment. On the other hand, if the client does not own the trained model, the server can arbitrarily forge proofs to pass the verification. Now we are facing a dilemma: \textit{how can the correctness of the ML algorithm execution be verified without knowing the trained model?} And further, \textit{how can fair payments be guaranteed?}

\subsection{VeriML Outline}
The core idea of VeriML is to make the training process retrievable, namely, the verifier can reconstruct the inputs and outputs for any specific iteration, and the retrieval process is verifiable. After the verification, a fair exchange of the payment and the trained model can be done via the blockchain.
Figure~\ref{fig:overview} depicts the architecture of VeriML.
In summary, there are three phases: \emph{Computation Phase}, \emph{Verification Phase} and \emph{Payment Phase}.

\begin{figure}[t!]
  \centering
  \includegraphics[width = 0.96\columnwidth]{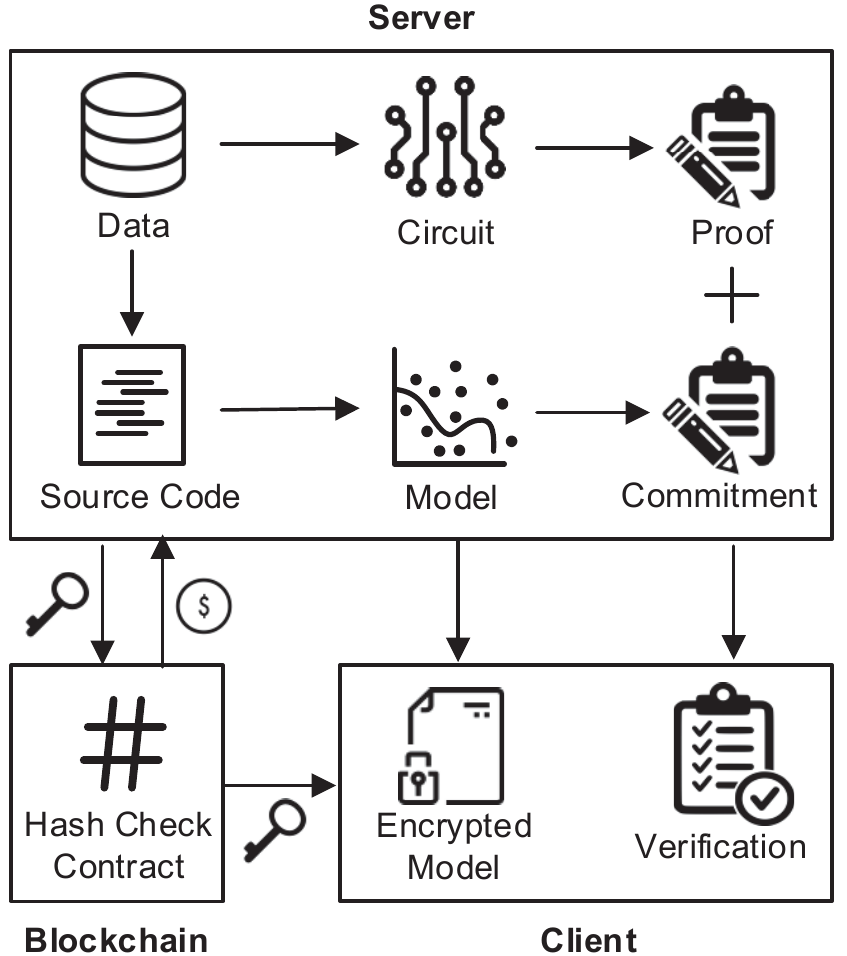}
    \caption{Overview and workflow of VeriML}\label{fig:overview}
\end{figure}

\textbf{Computation Phase.} $S$ executes the machine learning task. To enable the retrievability of the training process, $S$ needs to save additional auxiliary information, \eg, the intermediate states of the training process.

\textbf{Verification Phase.} After $S$ completes the training, $C$ sends multiple challenges to $S$ for the proofs of specified iterations. In particular, $C$ generates the key pairs and sends $EK_F$ to $S$. Instead of verifying all iterations, $C$ randomly samples a small subset of iterations as challenges, for which $S$ produces the corresponding proofs. If all proofs can pass the verification algorithm, $C$ considers that $S$ has executed the ML task correctly with high probability.

To retrieve a specified iteration, $S$ needs two inputs: the batch of the training data in the current iteration, and the model's state in the previous iteration. The data batch selection can be agreed upon by the two parties. The model's state can be retrieved by re-executing the training process. To reduce the additional overhead, $S$ saves some intermediate states of the model, referred to \emph{checkpoints}. Then any state can be retrieved from the checkpoints by several iterations without cryptographic operations. Finally, $S$ extracts a short identifier in each state during training, and sends it to $C$ for checking the correctness of the challenges.

\textbf{Payment Phase.} If all proofs pass the verification, $C$ will pay $S$ for the outsourced ML service. Inspired by ZKCP~\cite{maxwell2011zero}, we design a compact contract that only uses hash functions (\eg, $\mathsf{SHA256}$) for the limited blockchain-based scripting languages. The contract ensures a fair exchange between $C$ and $S$. Our protocol can resist some attacks of the original ZKCP~\cite{campanelli2017zero}, as will be discussed in Section~\ref{Security}.

\section{VeriML Design}
In this section, we present our design of VeriML. We first show how to provide a fair machine learning service for linear regression, and then extend the protocol to support other ML algorithms which include support vector machines, K-Means and decision trees.

\subsection{Linear Regression}\label{Linear}
We use a $d$-dimensional vector $\mathbf{w}$ to represent the parameters of the trained learning model. The value of $\mathbf{w}$ at the current iteration is referred to as a \emph{state}. In each iteration, the SGD algorithm inputs a batch of samples $\mathbf{X}$, their labels $\mathbf{Y}$ and the current state $\mathbf{w}$. The state $\mathbf{w}$ is updated by penalizing the deviation between predicted labels and actual labels of the batch. The update equation is
\begin{equation}\label{eq:BatchGD}
  \mathbf{w} = \mathbf{w}-\frac{\alpha}{b}\sum_{i=1}^{b}(\mathbf{x}_i\mathbf{w}-y_i)\mathbf{x}_i,
\end{equation}
where $b$ is the size of the batch and $\alpha$ is the learning rate.


\textbf{Decimal Arithmetic Operation.} Existing implementations of VC by QAP~\cite{parno2013pinocchio, costello2015geppetto, wahby2015efficient, libsnark} do not support decimal numbers in arithmetic operations. To address this problem, we adopt the fixed point number representation. We restrict that each input has at most $l$ bits of decimal points. Before invoking the circuit, we transform all inputs to integers by multiplying them by $2^l$. However, the multiplication of two inputs will yield a product that has a different length. In linear regression, the right side of Equation~(\ref{eq:BatchGD}) will become $\mathbf{w}\times2^l-\frac{\alpha}{b}\sum_{i=1}^{b}(\mathbf{X}\mathbf{w}\times2^{2l}-\mathbf{Y}\times2^l)\mathbf{X}\times2^l$ by the fixed point number representation, in which the amplification of each term is inconsistent, and the operation will fail to produce the right result. Therefore, we need to apply appropriate scaling factors in the equation by multiplying it with a certain constant. After each iteration, the output needs to be truncated to $l$ bits to serve as the input to the next iteration.

\textbf{Circuit Construction.} To verify the correctness of the SGD algorithm in each iteration, we need to construct a circuit for prediction, and then calculate the mean square error (MSE) of the prediction for the current batch of data. In linear regression, the prediction only involves addition and multiplication. MSE is calculated as $y=\frac{1}{b}\sum_{i=1}^{b}(\mathbf{w}\mathbf{x}_i-y_i)^2$, which can be implemented by addition and multiplication operations. If the batch size $b$ is more than 1, the division $\frac{1}{b}$ can be transformed to one multiplication by a constant since the batch size is set in advance, \eg, if the batch size is $32$, we can multiply the result by $2^{-5}$.

\begin{figure*}[t!]
  \fbox{%
  \begin{minipage} [t]{0.98 \linewidth}
   \noindent \textbf{VeriML Protocol for the Training Service}
     \begin{itemize}
        \item \textbf{Setup}:
        \vspace{-2pt}
        \begin{enumerate}
         \item The client $C$ sends the parameters of learning algorithm to $S$, which includes the learning rate $\alpha$, the batch size $b$ and the threshold of convergence $t$. After agreement, $C$ compiles the verification circuit $F$ to the QAP program locally.
         \item With the security number $\lambda$, $C$ generates the key pairs $(EK_F, VK_F)\leftarrow\mathsf{KeyGen}(F, 1^\lambda)$ and a random seed $s$ to be used for choosing batch samples. Then $C$ sends the seed $s$ to $S$.
        \end{enumerate}

        \item \textbf{Computation Phase}:
        \vspace{-2pt}
        \begin{enumerate}
        \item For each iteration $i$, $S$ selects the batch samples by seed $s$.
        \item Taking state $\mathbf{w}_{i-1}$ of the previous iteration and the data batch as input, $S$ outputs an updated state $\mathbf{w}_i$ and saves its identifier $\mathbf{I}_i$.
        \item After every $N/k$ iterations, $S$ saves the state as a checkpoint.
        \item If the difference of accuracy between epoch $e_m$ and $e_{m-1}$ is less than the threshold $t$, $S$ terminates the training process and sends identifiers $\mathbf{I}_1, \ldots, \mathbf{I}_{m}$ to $C$.
        \end{enumerate}

        \item  \textbf{Verification Phase}:
        \vspace{-2pt}
        \begin{enumerate}
        \item $C$ sends the circuit and $EK_F$ to $S$.
        \item $S$ checks the correctness of the circuit.
        \item $C$ randomly chooses $m$ iterations $s_1, \ldots, s_m$ for verification, and sends the set of indexes to $S$ to challenge the corresponding proofs $\pi_{s_1}, \ldots, \pi_{s_m}$.
        \item $S$ locates the checkpoints and retrieves states $\mathbf{w}_{s_1-1}, \ldots, \mathbf{w}_{s_m-1}$. Then $S$ produces proof $(\mathbf{I}'_{s_{i-1}}, \mathbf{I}'_{s_i}, \pi_{s_i})\leftarrow\mathsf{Prove}(EK_{F}, {F}, \mathbf{w}_{s_{i-1}}, \mathbf{X}_{s_i})$ for each $i$, and sends the results $(\mathbf{I}'_{s_{i-1}}, \mathbf{I}'_{s_i}, \pi_{s_i})$ to $C$.
        \item $C$ selects the inputs of samples also by the random seed $s$, then runs $v_{s_i}\leftarrow\mathsf{Verify}(VK_{F}, \mathbf{X}_{s_i}, \mathbf{I}'_{s_{i-1}}, \mathbf{I}'_{s_i}, \pi_{s_i})$ for all selected iterations. If there exists $v_{s_i}=0$, $C$ outputs $\mathsf{Reject}$; otherwise, $C$ compares $\{\mathbf{I}'_{s_{i-1}}$, $\mathbf{I}'_{s_i}\}$ with $\{\mathbf{I}_{s_{i-1}}$, $\mathbf{I}_{s_i}\}$ in the commitment. If $\forall$ $1\leq i\leq m$, the commitments and the verification results are the same, $C$ outputs $\mathsf{Accept}$, otherwise $\mathsf{Reject}$.
        \end{enumerate}

        \item \textbf{Payment Phase}: If $C$ outputs $\mathsf{Accept}$ after the verification phase, then the two parties enter the payment phase.
        \begin{enumerate}
        \item $S$ encrypts the trained model $\mathbf{w}$ by a symmetric key $k$, then sends $\textsf{Enc}_{k}(\mathbf{w})$ and $h=\textsf{SHA256}(k)$ to $C$.
        \item $C$ posts a transaction $T$ on the blockchain to pay the pre-determined fee $f$ to the party who presents $x$ such that $\textsf{SHA256}(x)=h$.
        \item $S$ presents string $z$ to $T$. If $\textsf{SHA256}(z)=h$, then $T$ sends the fee to $S$, otherwise returns the payment to $C$.
        \end{enumerate}

     \end{itemize}
  \end{minipage}
    }
  \caption{The VeriML protocol for the training service}
  \vspace{-10pt}
\label{VeriML}
\end{figure*}

\textbf{Model Blinding.} In existing VC schemes, $C$ must know the input and output of each iteration, \ie, the intermediate state of the model, to perform verification. However, the state cannot be revealed to $C$ since he may leverage it as a trained model and abort without payment. Therefore, we have to enable $C$ to verify the correctness of each iteration without knowing the input and output states of the iteration. The input state can be preserved by the zero-knowledge property of SNARK, and we design a blinding scheme to mask the output state as an \emph{identifier}. The blinding scheme should satisfy two requirements: 1) It is hard to retrieve the output state from the identifier and 2) The collision probability of two different output states is small. The second requirement prevents $S$ from producing a correct identifier with an incorrect input. Specifically, because the output state $\mathbf{w}'$ is only determined by the input state $\mathbf{w}$ and the batch $\mathbf{X}$, we need to ensure that it is hard for $S$ to generate a fake state which can pass the verification with a given batch.
	
Intuitively, the hash function is a good choice for blinding the states. Nonetheless, implementing hash functions such as $\textsf{SHA256}$ in arithmetic circuit requires a lot of shifting and bitwise operations, which are much more expensive than arithmetic operations, and it significantly raises the computation cost of $\mathsf{Prove}$. Moreover, the computation cost of hashes increases linearly with the dimension of the input state. To deal with this issue, we design a simple yet effective solution. We sum all elements in the output/input state vector $\mathbf{w}$, then only hash the summation\footnote{To avoid the additional leakage by the hash value, we concatenate the summation with another $254$-bit random number $r$ which is generated by a common random seed before hashing.}. The field of SNARK is $254$-bit long, shorter than the size of half a block in $\textsf{SHA256}$ (512 bits), thus the circuit only needs to execute one block in $\textsf{SHA256}$ to avoid the large cost in hashing a large state.

However, the straightforward summation has one shortcoming: the correctness of the summation result does not always ensure the correctness of the individual model parameters. To address this problem, we introduce an enhanced approach by using a random coefficient vector $v$ generated by $C$ to calculate the (random) weighted summation. And we would hash the inner product $\mathbf{w}\cdot\mathbf{v}$ as the identifier. We will analyze the security of this design in Section.~\ref{Security}.

\textbf{Hyper-parameter Encoding.} The learning rate $\alpha$ has a significant impact on the training performance.
Tuning $\alpha$ will affect the convergence speed of training, thereby affect the number of iterations and the model's performance. In this paper, we mainly consider a fixed learning rate such that we can encode $\alpha$ into the circuit as a constant to reduce the computation cost. If $\alpha$ changes from iteration to iteration, to avoid recompiling the circuit and generating the key, we can treat $\alpha$ as an input to the circuit, which adds $d$ multiplications for one iteration. The batch size $b$ is another important hyper-parameter that can be encoded into the circuit. Large batch size will increase the computation cost in each iteration. If the batch size increases by $1$, $2d$ additional multiplications are required for each iteration, which greatly augments the proving time.

\textbf{Iteration Verification.} After constructing a circuit to verify the execution of each iteration, $C$ can check all iterations to validate the integrity of the whole training process. However, the number of iterations may be huge, which leads to a prohibitively long time for proving correctness. To tackle this issue, we design a novel protocol which randomly samples a small subset of iterations to detect the misbehaviors during training.

Let $N$ denote the total number of iterations that are claimed to be completed by $S$. In the training process, after finishing the $i$-th iteration ($1\leq i\leq N$), $S$ saves the identifier $\mathbf{I}_i$ (a hash value of the summation of the elements in the output state $\mathbf{w}_i$). Upon completing the training process, $S$ sends all identifiers $\mathbf{I}_1, \ldots, \mathbf{I}_N$ to $C$ as a commitment for verification. $S$ may adulterate identifiers to falsify his workload.

When $C$ has received the commitment, he randomly samples a small subset $\{s_1, \ldots, s_m\}$ of $m$ iterations ($1\leq m\leq N$) as challenges. To produce the proof of the $s_i$-th iteration, $S$ needs two inputs: the output state $\mathbf{w}_{s_{i-1}}$ of the previous iteration and the batch $\mathbf{X}_{s_i}$ of this iteration. Then $S$ runs $\mathsf{Prove}(EK_F, {X}_{s_\mathbf{I}}, \mathbf{w}_{s_{i-1}})$ to produce the proof and sends it to $C$. Finally, $C$ checks if the produced proof $\pi_i$ can pass $\mathsf{Verify}(VK_F, \mathbf{I}_{s_{i-1}}, \mathbf{I}_{s_i}, \pi_{s_i})$  and determines if $\mathbf{I}_{s_i}$ is the same as the corresponding identifier in the commitment.

Apart from verifying the consistency of the output of $\mathsf{Verify}$ with the commitment, we also have to verify the authenticity of the input state because $S$ may output meaningless results but still can pass the verification. For example, $S$ can use an all-zero vector $\mathbf{0}$ as a fake state. The output of forward propagation is always $\mathbf{0}$, which allows $S$ to generate correct identifiers while cheating on the computation (only conduct the backward propagation calculation). However, the identifiers in the commitment are generated by the output of each iteration, which is different from the input to the next iteration, since we truncate the output to $l$ bits before feeding it as input to the next iteration. Therefore, we cannot verify the authenticity of the input state directly from the commitment. To solve this problem, we propose to let $S$ prove that it has the preimage of identifier $\mathbf{I}_{{i-1}}$ in the commitment, and the difference between the preimage and the summation of the input state is less than $\frac{d}{2^{l}}\sum_{j=1}{d}v_{ij}$, where $d$ is the dimension of the data.

Saving all intermediate states inevitably induces a high storage cost for $S$. Therefore, we propose to set checkpoints to retrieve the states. Specifically, we partition the $N$ iterations into $k$ groups, and thus each group contains $N/k$ states. $S$ only saves the output state of the first iteration in each group as a checkpoint such that the $i$-th state in a group can be retrieved by re-conducting $i$ iterations from the checkpoint. Note that $S$ can tune the parameter $k$ to make a tradeoff between the storage and retrieving costs. This retrieving process does not involve any cryptographic operations, \ie, generating keys or producing proofs, thus the additional computation cost is acceptable.

Obviously, the core of our scheme is to reproduce the identifiers by VC. To achieve this, in the training and verification phases, the chosen data batches need to be consistent. In general, the inputs are chosen by randomly shuffling the whole dataset, so we can let $C$ assign a random seed and send it to $S$. For each epoch, as long as the two parties use the same seed and algorithm to shuffle the dataset, they can reproduce and verify the identifiers later. Note that $C$ has to check the distinctness of the selected batches. Otherwise, $S$ will have advantages to forge identifiers.

  \begin{figure}
    \fbox{%
    \begin{minipage}{0.98 \linewidth}
     \noindent \textbf{VeriML Protocol for Prediction Service}
          \begin{enumerate}
           \item $C$ generates the key pair $(EK_F, VK_F)\leftarrow\mathsf{KeyGen}(F, 1^\lambda)$, then $C$ sends $EK_F$ and data $\mathbf{X}$ to $S$.
           \item $S$ runs the prediction algorithm to compute the result. Then $S$ produces the proof $\pi\leftarrow\mathsf{Prove}(EK_{F}, {F}, w, \mathbf{X})$ which takes the result as a witness, and compares if the witness is equivalent to the output of the circuit, then sends the produced proof $\pi$ to $C$.
           \item $C$ runs $v\leftarrow\mathsf{Verify}(VK_{F}, \mathbf{X}, \pi)$ and outputs $\mathsf{Reject}$ if the output is 0. Otherwise, $C$ sends the request for the result.
           \item $S$ encrypts the result $m$ as $c$ by a symmetric key $k$: $c=\textsf{Enc}_{k}(m)$, and sends $c$ and the hash value $h=\textsf{SHA256}(k)$ to $C$.
           \item $C$ posts a transaction $T$ on the blockchain to pay the fee $f$ to the party who presents $x$ such that $\textsf{SHA256}(x)=h$.
           \item $S$ presents the string $z$ to $T$. If $\textsf{SHA256}(z)=h$, $T$ sends the fee to $S$, otherwise gives the refund to $C$.
          \end{enumerate}
    \end{minipage}
    }
    \caption{VeriML protocol for the prediction service}
    \vspace{-10pt}
  \label{VeriMLp}
  \end{figure}

\textbf{Payment.} The payment process has to ensure a fair exchange, \ie, to prevent either $S$ or $C$ from cheating one another. More specifically, a malicious $C$ would like to obtain the trained model without paying anything while a malicious $S$ expects to get the payment without executing the whole training process. Inspired by ZKCP, we design an effective payment protocol that can achieve a fair exchange. First, $S$ encrypts the final state of the trained model using a symmetric key $k$, then sends $\textsf{Enc}_k(\mathbf{w})$ and $h=\mathsf{SHA256}(k)$ to $C$. Then, $C$ posts a transaction on the blockchain to transfer the payment to anyone who reveals a preimage of $h$. In this way, $C$ cannot decrypt the trained model without payment via the transaction on the blockchain, and $S$ cannot obtain the payment without passing the verification or providing the key to $C$. We hereto realize the fair exchange.

\textbf{Prediction Service.} We also consider the case that $S$ provides the prediction service to $C$ instead of training a model. To verify the prediction results, we can construct a similar circuit, but do not have to yield multiple identifiers to help check the correctness of each iteration. Here, $S$ should not directly reveal the prediction result to $C$ for verification since $C$ may get away with the result without paying anything. Hence, we transform the verification process to a zero-knowledge proof that sets the prediction results as a witness, and compares whether the witness equals the result of forward propagation calculated by the circuit. If it does, $C$ can confirm that $S$ actually generates the correct prediction results.

We present the protocol for the training service by summarizing the above constructions in Figure~\ref{VeriML}. An adapted protocol for the prediction service is given in Figure~\ref{VeriMLp}.

\subsection{Logistic Regression}\label{Logistic}
Compared with linear regression, logistic regression faces the main challenge of computing the sigmoid activation function $f(x) = \frac{1}{1+e^{-x}}$, since the division and exponentiation are not supported by an arithmetic circuit. Prior works have presented various approaches to approximate the sigmoid function. In this section, we will discuss the differences of these approaches in terms of efficiency and accuracy, and propose a more feasible solution for the QAP application.

Piecewise and polynomial approximation are two mainstream approaches to implement sigmoid functions in prior works~\cite{mohassel2017secureml, gulcehre2016noisy, cheon2017privacy}. The piecewise method, which relies on comparisons, is much more expensive than arithmetic operations in QAP, since one comparison between two $l$-bit integers requires a \emph{split} operation that consumes $l+2$ constraints while one multiplication between two integers only consumes one constraint. Therefore, the polynomial approximation is obviously more favorable.

Taylor expansion is a classical method to approximate nonlinear functions, and its accuracy highly depends on the degree of the polynomial terms. The higher the degree of the polynomial term, the better the approximation performance. But the computation cost will be higher, and it is easy to exceed the finite field. Inspired by~\cite{cheon2017privacy}, we use the \emph{Remez} algorithm to implement the approximation with high efficiency and accuracy. However, the Remez method is still only suitable for a certain range of the input because of the unbounded tails, and the input beyond the proper range may affect the accuracy of the approximation, hence we need to set an appropriate range to calculate the approximated polynomial.

Setting the degree of polynomials as three, and the approximated range of $x$ as [-5, 5], the approximation of Remez is $f(x)=-0.004x^3+0.197x+0.5$ and that of the Taylor expansion is $f(x)=-\frac{1}{48}x^3+0.25x+0.5$. It can be seen that the Remez-based approximation is closer to the original sigmoid function in a wider range. More results about the training accuracy of the approximations are presented in Section~\ref{sec:eva}.

\subsection{Neural Networks}\label{NN}
To efficiently apply VeriML to neural networks, we design an \emph{inversed verification} method to reduce the size of the circuit by utilizing pre-computed results.

Traditional matrix multiplications using circuit has a complexity of $O(n^3)$,
which is very time-consuming. Observations in prior works show that verifying the correctness of results is much cheaper than computing the results forwardly~\cite{ben2013snarks, costello2015geppetto, kosba2018xjsnark}. For example, verifying $c=a/b$ is hard to be implemented in QAP but instead we can verify $a=b\times c$ efficiently. Following this rationale, we use Freivald's algorithm~\cite{motwani2010randomized} to inverse the forward computation when constructing the circuit. Freivald's algorithm is a probabilistic randomized algorithm for verifying matrix multiplications. Assume that we have three $n\times n$ matrices $A$, $B$ and $C$. By using a uniformly-sampled $n\times 1$ random vector $r$ over field $\mathbb{Z}_s^n$, the correctness of $AB=C$ can be reviewed by verifying whether $A(Br)=Cr$ stands. The false positive rate is $1/(s+1)$. By selecting a large field, we can reduce the probability of a false positive to a negligible value. Because the random $r$ is selected after $S$ has generated the commitment, it can be selected in advance and encoded in the circuit as constants to save the computation cost.

With the help of Freivald's algorithm, the verification requires only $O(n^2)$ multiplications. For neural networks, let $n_i$ denote the number of neurons in the $i$-th layer. The number of matrix multiplications in one layer descends from $bn_in_{i-1}$ to $bn_{i-1}$ (with $bn_{i-1}+n_in_{i-1}$ multiplications by constants). For a neural network with $\beta$ matrix multiplications, as these multiplications are independent, the false positive rate for one iteration is less than $1/(s+1)^\beta$, which is negligible.

In a typical SNARK implementation, the arithmetic circuit operates over a 254-bit field. The continuous multiplications of multiple layers make the length of the results rapidly increase, which may lead to an overflow. The proposed verification inversion method can mitigate the overflow due to fewer multiplications. Furthermore, we can truncate the multiplication results before feeding them to the circuit. To avoid the problem of inconsistent bit lengths of different results, we conduct an additional check to see if the difference between the two results is less than $2^{-l}$. In addition, some other tricks which utilize randomness in training neural networks, such as dropout, are easy to be implemented by using a consistent random seed.

\textbf{Softmax Function Verification.} The softmax function $f(x_i)=\frac{e^{-x_i}}{\sum_{i=1}^{\kappa}e^{-x_i}}$ is used for multi-class classification, where $\kappa$ is the number of classes. To verify the function, we have to tackle the difficulties of computing exponentiation and division in the circuit. The output of softmax is a probability distribution so that all results are non-negative. Therefore, we can adopt square function to replace exponentiation operations. Unlike other activation functions that can be approximated into division-free forms, the division operation in the softmax function is inevitable. Since that division is not supported by SNARK (division by constant can be transformed to multiplication), we adopt the strategy of inverting verification that checking the equality of the dividend and the product of the divisor and the result. For each data sample, $S$ calculates the results of softmax and feeds them as witness to the circuit. For a batch of data, the input to the circuit is a $b\times\kappa$ matrix, and the circuit needs to perform $bm$ additional multiplications.

\subsection{Support Vector Machine}
Besides the SGD-based methods, support vector machine (SVM) is another classic and popular machine learning algorithm, for solving classification problems. The training process of SVM can also be represented as sequential loop iterations, so we can transfer it into the VeriML framework naturally. In this paper, we consider the most representative case of binary classification proposed in~\cite{shalev2011pegasos}. The details of SVM are omitted due to the space limitation.

The main construction of training a SVM model is the same as the prior methods, \ie, using a small batch of data to update the model after each iteration until the objective function converges. Specifically, apart from the basic additions and multiplications, each iteration includes two divisions, one Euclidean projection and $b+1$ comparisons. As discussed before, verifying divisions and square roots can be transformed to the multiplication operations using the pre-computed results. Therefore, we can construct the arithmetic circuit straightforwardly.

\subsection{K-Means}
VeriML can be scaled with clustering, a type of task which expects to partition multiple data samples into several clusters, because the process of training a clustering model also consists of sequential loop iterations. Here we use the most typical clustering algorithm--K-Means to demonstrate the protocol design.

In the beginning, $C$ randomly chooses $k$ centroids to represent the clusters. In each iteration, the K-Means algorithm assigns each training sample to the cluster closest to it, then uses the average of all the samples in this cluster to update the centroid. Here we choose to use a small batch of data in each iteration to reduce the circuit size. According to the prior results~\cite{sculley2010web}, this operation will not have a significant impact on the accuracy. Because the centroids are represented by vectors, the commitment can be generated by the random coefficient vector as well.

\textbf{Verifying the Closest Distance. }The main cost of executing the circuit lies in finding the closest distance of the $k$ centroids. When the batch size is $b$, each iteration needs to execute $bk$ comparisons, which involve large computation overhead. Here we avoid the comparisons by checking the correctness of the candidate closest distance given by the server.  Specifically, for each data sample, the circuit takes the candidate closest distance as an input. Then the circuit executes the subtractions between the candidate closest distance and all the $k$ previously-obtained distances, respectively. If the circuit finally outputs only one 0, and all the other results are negative, the candidate closest distance is considered as a correct one.

\subsection{Decision Tree}
Decision tree and its variants are significantly different from other machine learning methods discussed above. This is because training or using a tree is composed of comparisons, rather than additions and multiplications. Intuitively, verifying the correctness of the structure of a trained tree, is equivalent to checking two conditions: 1) for each internal node, whether the partition is done based on the largest information gain, and 2) whether the data samples belonging to this node are actually composed by its children nodes. The sampling strategy can still be adopted to verify the integrity of the whole tree, and the comparisons can be verified efficiently by the technique we proposed in verifying K-Means. Because the number of data samples directly affects the correctness of finding partitions, the batch strategy is no longer applicable. Now we are facing two challenges: (1) how to reduce the circuit's I/O caused by traversing all the training data, and (2) how to avoid revealing the tree to $C$ in the verification phase.

\textbf{Compressing the Inputs.} Instead of using batches to split the data horizontally, we use histograms to reduce the input size vertically. LightGBM~\cite{ke2017lightgbm} proposed representing inputs with histograms to accelerate the training of decision trees. Inspired by this idea, in our design, $C$ buckets feature values into multiple bins before the training process. Concretely, for each feature, $C$ first converts its field to bins, then traverses all the data to construct the histogram and sends it to $S$. $S$ can only use the histogram to calculate information gains and find partitions. Assume that there are $n$ data samples. Each sample has $d$ features, each of which is bucketed into $k$ discrete bins. Using histogram thus can reduce the number of inputs from $nd$ to $kd$. In the meanwhile, verifying if a node is correctly partitioned can be transformed to verifying if its histogram is equivalent to the addition of its child nodes' histograms.

\textbf{Commitments for Decision Tree.} Compared with other ML algorithms, training decision tree does not include the iterative optimization process. This makes the previous method which blinds the states of the model no longer applicable. Instead, we observe that the histogram is updated after each partition, and the updates cannot change  the structure other than the values of bins, so we can blind the histogram with the same commitment method described before, and check if the histograms of sampled nodes are the same with the commitment. Such a commitment may reveal the structure of the tree to $C$, and this can be fixed by adding dummy nodes to fill the tree to be ``perfect'', \ie, all the leaves have the same depth and all the internal nodes have the same degree.

Computing predictions by decision tree can only be implemented by multiple comparisons. Note that for categorical features, comparing the features with partitions can be transformed to checking the equalities, which has lower computation costs.

\section{Security Analysis}\label{Security}

\begin{theorem}
Assuming there exists no more efficient algorithm that can output the correct result than the training algorithm, the proposed protocol VeriML is secure if the properties of completeness, soundness, and fairness are satisfied simultaneously.
\end{theorem}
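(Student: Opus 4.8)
The plan is to prove the theorem by establishing the three named properties \emph{completeness}, \emph{soundness}, and \emph{fairness} individually, since by Definition~\ref{Model Definition} and the list following it ``secure'' is precisely the conjunction of these three; the skeleton mirrors the standard argument for verifiable-computation protocols~\cite{gennaro2013quadratic}, augmented with the commit-and-prove and sampling layers specific to VeriML. \emph{Completeness.} First I would show that if $S$ runs $\mathsf{Compute}$ and $\mathsf{Prove}$ honestly on the agreed $F$, $D$, seed $s$, and checkpoints, then for every challenged iteration $s_i$ the SNARK completeness yields $\mathsf{Verify}(VK_F,\mathbf{X}_{s_i},\mathbf{I}'_{s_{i-1}},\mathbf{I}'_{s_i},\pi_{s_i})=1$, and the identifiers recomputed from the honest states match the committed $\{\mathbf{I}_{s_{i-1}},\mathbf{I}_{s_i}\}$ because both are the hash of the same inner product $\mathbf{w}\cdot\mathbf{v}$ under the common random seed, so $C$ outputs $\mathsf{Accept}$; then an honest $S$ reveals $k$ with $h=\mathsf{SHA256}(k)$, the transaction $T$ releases the fee, and $\mathsf{Payment}$ outputs $1$, which is exactly the required implication $\mathsf{Verify}=1\Rightarrow\mathsf{Payment}=1$.

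\emph{Soundness.} This is the crux, argued by contraposition: suppose $S$ deviates yet makes $C$ accept. Deviation takes three forms, and I would bound the cheating probability of each. (i) \textbf{Forging a single-iteration proof:} a $\pi_{s_i}$ that verifies for an incorrect transition $\mathbf{w}_{s_{i-1}}\mapsto\mathbf{w}_{s_i}$ contradicts the knowledge-soundness of the QAP-based SNARK (its witness extractor would otherwise break the underlying assumption), except with negligible probability; for neural-network layers where Freivald's check replaces a full multiplication, the extra error is at most $1/(s+1)^{\beta}$, again negligible over the $254$-bit field. (ii) \textbf{Faking identifiers or intermediate states:} here I invoke the two properties demanded of the blinding scheme --- one-wayness and collision resistance of $\mathsf{SHA256}$ applied to the randomized weighted summation $\mathbf{w}\cdot\mathbf{v}+r$ --- and the fact that $\mathbf{v}$ is drawn by $C$ \emph{after} the commitment is fixed, so a wrong state collides with a committed identifier with probability $O(1/|\mathbb{F}|)$ per iteration; the auxiliary preimage check, that $S$ knows a preimage of $\mathbf{I}_{i-1}$ within distance $\frac{d}{2^{l}}\sum_j v_{ij}$ of the summation of the claimed input state, kills the all-zero-state attack and any substitution that alters the model but not the masked value, while the enforced distinctness of sampled batches removes the last replay freedom. (iii) \textbf{Executing fewer iterations than claimed:} under the hypothesis that no algorithm cheaper than the training algorithm outputs a correct state, a server that genuinely completed only $N'<N$ iterations must leave at least $N-N'$ committed identifiers that cannot be consistently opened, and the probability that a uniform size-$m$ sample (without replacement) misses all of them is $\binom{N'}{m}/\binom{N}{m}$, which the protocol forces below $2^{-l}$ by choosing $m$. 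A union bound over the $m$ challenges then gives $\Pr[\mathsf{Verify}=1\mid S\text{ cheats}]<2^{-l}$.

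\emph{Fairness.} Finally I prove both directions. ``$C$ learns the witness $\Rightarrow$ $C$ paid'': before the payment phase $C$ sees only $EK_F$, the challenges, the SNARK proofs, the identifiers, and $\mathsf{Enc}_k(\mathbf{w})$; by the zero-knowledge property of SNARK and the semantic security of the encryption these leak nothing about $\mathbf{w}$, so the only route to $k$, hence to $\mathbf{w}$, is for $S$ to post the preimage of $h$ to $T$, which by the rules of $T$ occurs only after $C$'s funds are locked and transferred --- the ZKCP argument verbatim. ``$S$ paid $\Rightarrow$ $S$ has the correct result'': $T$ releases the fee only against $z$ with $\mathsf{SHA256}(z)=h$, and by collision resistance $z=k$ decrypts $\mathsf{Enc}_k(\mathbf{w})$ to the very model whose training passed the verification phase, so soundness says that model is correct; atomicity of the blockchain transaction rules out any intermediate state in which one party holds both the money and the model.

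\emph{Main obstacle.} I expect part (iii) of soundness --- quantifying, under the ``no cheaper algorithm'' assumption, exactly how many committed identifiers an under-working server is forced to leave inconsistent, and folding the checkpoint/retrieval cross-iteration dependency into the clean $\binom{N'}{m}/\binom{N}{m}$ bound --- to be the delicate step, together with ruling out every ``well-formed but meaningless'' state attack uniformly rather than case by case.
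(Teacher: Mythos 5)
Your overall route matches the paper's: the same three-way decomposition into completeness, soundness, and fairness, a per-iteration argument combining SNARK soundness with the randomized-identifier collision bound, a hypergeometric sampling bound for under-claimed workload, and a ZKCP-style hash-locked argument for fairness. Two points, however, diverge from what the paper actually proves and would need repair. In your soundness part (ii) you rely on the coefficient vector $\mathbf{v}$ being drawn by $C$ \emph{after} the commitment is fixed. The paper assumes the opposite: its Case~1 explicitly states that $S$ knows $\mathbf{v}$ before making the commitment, and its two-case analysis is designed to bound the forgery probability even then --- $2^{-2l}$ when both identifiers are forged outright (the dot product $\mathbf{w}\cdot\mathbf{v}$ extends the hash preimage to $2l$ bits), and $2^{-l}$ when $S$ fabricates a state $\mathbf{w}^*\neq\mathbf{w}$ and needs $\sum_j \mathbf{v}_j(\mathbf{w}_j-\mathbf{w}_j^*)=0$ over a uniformly random $\mathbf{v}$. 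Your version silently strengthens the protocol's setup rather than proving the harder case the paper addresses.

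Second, your part (iii) asserts that the protocol ``forces'' $\binom{N'}{m}/\binom{N}{m}$ below $2^{-l}$ and concludes $\Pr[\mathsf{Verify}=1\mid S\text{ cheats}]<2^{-l}$. That is not what the protocol delivers. The paper's final bound is $\epsilon = p + (1-p)/(2^l-1)$ with $p=\binom{tN}{c}/\binom{N}{c}$ as the dominant term, and with the $10$--$14$ challenges the protocol actually prescribes, $p$ is only driven down to roughly $5\%$ or $1\%$ for $t=0.7$ --- a tunable probabilistic assurance, not $2^{-l}$. Driving $p$ below $2^{-l}$ for, say, $l=32$ would require on the order of $l/\log_2(1/t)\approx 60$ samples, which contradicts the efficiency claims. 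In the paper, the $2^{-l}$ quantity bounds only the event that a \emph{sampled} forged identifier nevertheless passes verification (the relaxation from $p$ to $\epsilon$), not the event that a cheating server escapes detection altogether; your union bound conflates the two.
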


\begin{proof}
This assumption is reasonable, since otherwise the original training algorithm would be meaningless.

\emph{Completeness.} The completeness of the setup, computation and verification phases depends on the completeness of the underlying verifiable computation scheme.
The completeness of the payment phase depends on the correctness of $\textsf{SHA256}$ and the consensus mechanism of blockchain.

If both $S$ and $C$ faithfully follow the protocol, $S$ can pass the $\mathsf{Verify}$ algorithm and earn the service fee by presenting the preimage of the key to $C$ for decrypting the trained model, and $C$ can obtain the trained model or prediction service by posting the payment transaction.

\emph{Soundness.}
A forged training workload will be accepted by $C$ iff all the sampled iterations can pass the verification algorithm.
Because the outputs of VC can always be considered as the ground truth with the given inputs, a cheating server $S$ aims to provide a pair of models $\mathbf{w}_{i-1}$ and $\mathbf{w}_{i}$ such that the outputs of VC are the same with the commitment.
If the probability for achieving this is less than $2^{-l}$, the soundness property can be proved.
We show that if $S$ can pass $\mathsf{Verify}$ without the correct execution, there are only two ways for $S$ to cheat.

\textsc{Case 1. } The cheating $S$ directly forges the identifiers $\mathbf{I}_{i-1}$ and $\mathbf{I}_{i}$ without using the models when generating the commitment.
$S$ knows the randomly generated coefficient vector $\mathbf{v}$ before making the commitment.
We consider that $S$ can always find one of $\mathbf{w}_{i-1}^*$ and $\mathbf{w}_i^*$ which corresponds to the committed identifier, and he only needs to find the other one.
When the $i$-th iteration is sampled, since the forged identifiers are not computed from a model, the distribution of the identifiers is uniform (from the perspective of $S$).
If the bit lengths of both the output parameters $\mathbf{w}$ and the random coefficients $\mathbf{v}$ are $l$, their dot product extends the preimage of the hash to $2l$ bits, and the probability that the forged identifier is the same as the ground truth is $2^{-2l}$.

\textsc{Case 2. } We assume that the cheating $S$ can find a way to compute $\mathbf{w}_{i-1}^*$ or $\mathbf{w}_{i}^*$ without executing the training algorithm when generating the commitment.
We do not make any assumption about its approach, \eg, he may utilize some background knowledge about the model such as the distribution of the parameters.
Yet, according to the basic assumption that there exists no way to obtain the result without the correct computation,
the forged model and the ground truth should be different.
We also consider the scenario where $S$ can find one of $\mathbf{w}_{i-1}^*$ and $\mathbf{w}_{i}^*$ corresponding to the committed identifier.
Without loss of generality, the unknown model is denoted by $\mathbf{w}^*$.
To ensure that $\mathbf{w}^*$ and the ground truth $\mathbf{w}$ have the same identifier, we need to guarantee

    \begin{equation}\label{equ:proof4}
      \mathbf{v}_{1}(\mathbf{w}_{1}-\mathbf{w}_{1}^*)+\mathbf{v}_{2}(\mathbf{w}_{2}-\mathbf{w}_{2}^*)+\cdots+\mathbf{v}_{d}(\mathbf{w}_{d}-\mathbf{w}_{d}^*)=0.
    \end{equation}
Since $\mathbf{w}^*$ does not equal to $\mathbf{w}$, there exists $i$ such that
    \begin{equation}\label{equ:proof5}
    \mathbf{v}_{i} = \frac{\mathbf{x}(\mathbf{w}_{i}-\mathbf{w}_{i}^*)-\mathbf{v}_{i}(\mathbf{w}_{i}-\mathbf{w}_{i}^*)}{\mathbf{w}_{i}^*-\mathbf{w}_{i}}.
    \end{equation}

Because the coefficient vector $\mathbf{v}$ is uniformly random, the value of the right hand side in Equation~(\ref{equ:proof5}) is uniformly random. If the bit length of $\mathbf{w}$ is $l$, the probability that $\mathbf{v}_{i}$ satisfies Equation~(\ref{equ:proof5}) is $2^{-l}$.

There may exist some special cases which will discard some steps in an iteration. For example, in the SGD algorithm, if $S$ can find a model $\mathbf{w}^*$ whose outputs are exactly the same as the labels of the data batch, \ie, $\mathbf{w}^*\times\mathbf{x}=\mathbf{y}$, $\mathbf{w}^*$ will not be updated in this iteration. In other words, $S$ can directly output $\mathbf{w}^*$ without any computations. Such an attack is easy to be detected by VC, because $S$ knows the existence of the detection. Thus the soundness property of our system is satisfied.

If the claimed total number of iterations is $N$, the proportion of genuine identifiers is $t$ (\ie, the proportion is $1-t$ for corrupted ones), and $C$ randomly samples $c$ different iterations, the probability that all the sampled iterations are genuine is
  \begin{equation}\label{equ:sample}
    p = \frac{\tbinom{tN}{c}}{\tbinom{N}{c}} = \frac{(tN+1-c)(tN+2-c)\cdots(tN)}{(N+1-c)(N+2-c)\cdots N},
  \end{equation}
which means that the upper bound of this probability is $p$.

When a fake identifier passes the verification, this upper bound will be relaxed.
Given the definition of soundness, if all the data and parameters have $l$ bits, the probability that a fake identifier is exactly the same as the genuine one is less than $2^{-l}$.
Hence, the upper bound is relaxed to

\begin{equation}
 \epsilon = \sum_{i=0}^c\frac{\tbinom{tN}{c-i}}{\tbinom{N}{c}}\frac{1}{2^{li}}< p+(1-p)\sum_{i=1}^c\frac{1}{2^{li}}< p+\frac{1-p}{2^l-1}.
\end{equation}

For instance, if $S$ claims 100K iterations and performs only 70K iterations ($t=70\%$), $C$ only needs to verify $10$ or $14$ iterations to detect the misbehavior with a probability higher than $95\%$ or $99\%$, respectively.

\emph{Fairness.}
A malicious $S$ has incentives to forge proofs or identifiers in the commitments.
Since the program and the evaluation keys are provided by $C$, it is easy to detect fake proofs.
The probability that a fake commitment passes the verification is extremely small after sampling multiple iterations.
Therefore, $S$ is prevented from forging the training process, since once detected, it will not get paid.
Also, $S$ will faithfully present the preimage of the key $k$ to receive the payment.
As a result, the protocol ensures that $S$ will deliver the trained model and claim the real training workload.

A malicious $C$ is motivated to learn about the trained model without payment, and $C$ can only obtain the output of VC (\ie, the confidentiality of the witness depends on the underlying VC protocol).
It is also difficult for $C$ to manipulate $EK_F$ to infer the exact value of the witness.
Other malicious behaviors, such as claiming that a correct proof is incorrect or posting an invalid transaction on the blockchain, cannot help $C$ to learn the model.

\end{proof}
\begin{table*}[t!]
\centering
\begin{tabu}{c|c|c|c||c|c|c|c|c|c}
  \tabucline[1pt]{-}

Algorithm & Dataset & Dimension& Batch & $\mathsf{KeyGen}$ (s) & $\mathsf{Prove}$ (s) & $\mathsf{Verify}$ (ms) & $\mathsf{Native}$ (ms) & EK (MB) & VK (MB)  \\ \hline \hline
\multirow{3}*{Linear Regression} & \multirow{3}*{BHP} & \multirow{3}*{13} & 32 & 2.0 & 0.46 & 5.9 & 32 & 11.7 & 18.6 \\
\cline{4-10}
&& & 64 & 2.0 & 0.47 & 6.4 & 34 & 11.8 & 36.1 \\
\cline{4-10}
&& & 128 & 2.1 & 0.51 & 7.5 & 36 & 12.1 & 70.9 \\
\hline
\multirow{3}*{Logistic Regression}  & \multirow{3}*{US} & \multirow{3}*{15} & 32 & 2.1 & 0.42 & 6.1 & 33 & 11.9 & 21.1 \\
\cline{4-10}
&& & 64 & 2.2 & 0.45 & 6.2 & 34 & 12.1 & 41.0 \\
\cline{4-10}
&& & 128 & 2.2 & 0.47 & 7.6 & 37 & 12.5 & 80.1 \\
\hline
\multirow{3}*{NN} & \multirow{3}*{MNIST} & \multirow{3}*{784} & 32 & 25.4 & 12.0 & 989 & 475 & 159.1 & 19.5 \\
\cline{4-10}
&& & 64 & 27.0 & 13.4 & 1068 & 825 & 166.7 & 21.2 \\
\cline{4-10}
&& & 128 & 33.4 & 16.7 & 1231 & 1523 & 185.9 & 24.4 \\
\hline
\multirow{3}*{SVM} & \multirow{3}*{US} & \multirow{3}*{15} & 32 & 2.1 & 0.43 & 6.1 & 94 & 11.6 & 24.2 \\
\cline{4-10}
&& & 64 & 6.4 & 0.45 & 6.5 & 103 & 11.7 & 45.4\\
\cline{4-10}
&& & 128 & 2.2 & 0.47 & 7.5 & 115 & 11.8 & 87.8 \\
\hline
\multirow{3}*{K-Means} & \multirow{3}*{MNIST} & \multirow{3}*{784}& 32 & 8.7 & 2.9 &42.3 & 334 & 52.4 & 1.6 \\
\cline{4-10}
&& & 64 & 15.3 & 5.4 & 64.5 & 601 & 99.4 & 2.5 \\
\cline{4-10}
&& & 128 & 27.3 & 9.6 & 106 & 1253 & 189.5 & 4.5 \\
\hline
 Decision Tree & Nursery & 27 & $-$ & 6.2 & 1.8 & 5.5 & 136 & 11.9 & 0.01   \\
  \tabucline[1pt]{-}
\end{tabu}
\caption{Verification costs of all implemented algorithms}
\label{table:all}
\end{table*}

\textbf{Comparisons.} Several other attacks have been proposed to break the original ZKCP protocol~\cite{campanelli2017zero}. First, if the purchase is a service (not goods), \eg, an audit of online file storage, $C$ can infer from the proof that the service is correct, and then abort the protocol without payment. However, for the outsourced ML service, the client aims to obtain the trained model or prediction results rather than a simple answer of yes or no. The proof only allows $C$ to certify the correctness of the service but will not reveal any additional information. Hence, there is no incentive for $C$ to abort the protocol after verifying the proofs.

Second, $C$ may modify the common reference string (CRS), \ie, the $EK_F$ and $VK_F$, to learn information from proofs~\cite{campanelli2017zero}. A malicious $C$ can check whether a value in the witness is the exact one or not. For instance, in the pay-to-sudoku service, the client can find out the exact value for a Sudoku cell with a probability of $1/9$. However, for machine learning services, it is difficult to find the values in witness as they are represented by more bits. Specifically, if a parameter in the witness has $l$ bits, the probability of finding its exact value is $1/2^l$, which is negligible. Therefore, $C$ is allowed to choose the CRS in our scheme.

Finally, $C$ needs to ensure that the results delivered by $S$ are actually encrypted by the key. This may require symmetric encryption schemes such as AES~\cite{lindell2014introduction}, and the circuit implementation of which will lead to a high cost ($4.2\times 10^6$ constraints only for 300 blocks~\cite{kosba2018xjsnark}). For instance, the simple three-layer fully-connected neural network with 128 hidden neurons consumes 25,408 blocks (if each parameter has 32 bits), which is too expensive. Recently, FairSwap~\cite{dziembowski2018fairswap} proposed to utilize the Merkle tree to verify the hash value of a large file without using the zero-knowledge proof. Combining this technique with VeriML may help achieve stronger security guarantee.

In summary, the attacks which can break the ZKCP protocol will not affect our scheme, because it is difficult to guess the continuous values in machine learning, and what we focus on is the verification of the computation workload of the training process, rather than verifying the existence of just a service or a file. Moreover, because the underlying VC protocol is independent with our system, these attacks can be better defended by other building blocks which do not need the trusted setup, such as~\cite{chiesa2017zero,wahby2015efficient}.

\textbf{Discussions about the DoS attack.} Apart from the fairness concern to ensure a fair exchange of service and payment between the client and the server, another issue is that parties may prematurely abort from the protocol. For instance, a malicious server may be interested in obtaining the data from the client but abort before or during the computation phase, while a malicious client may launch a DoS attack by aborting the protocol before the payment phase or lying about the verification to consume the computing resources of the server.

One possible solution is to ask the client and the server to make certain deposits in advance. If the protocol is followed through, each party will be refunded their deposits; if one party prematurely aborts the protocol, this party will lose the deposit. Intuitively, the deposit of the malicious party should be used to compensate for the other party. However, it is difficult for the smart contract to detect certain malicious behaviors, \eg, if the client outputs $\textsf{Reject}$ in the verification phase, it is hard to tell whether the server fails in the verification or the client lies about the verification. This is partly because the cost of implementing complex operations on a smart contract is very expensive. In~\cite{dziembowski2018fairswap}, the authors proposed to mitigate the risk from a malicious client by making the server do some pre-computation activities. But in MLaaS, the server cannot perform any pre-computation before having received the data from the client. Therefore, the traditional deposit and refund strategy cannot well solve the malicious abort problem in MLaaS, and this can be an interesting open problem.

\textbf{The correctness of the delivery. }
In this paper, we are mainly concerned about the  correctness of the delivery, \ie., ${S}$ will not deliver a fake model if the integrity of the training has been verified, since this behavior brings no economic incentives, and may affect the server's reputation potentially. To provide stronger security guarantee, here we discuss how to verify the correctness of the delivered model.

The goal of verifying the model correctness is to ensure that the delivery has the claimed prediction accuracy, $S$ will first construct a circuit, taking the delivery as input, to make predictions on a small test set. Then, $S$ extends this circuit to commit the hash of the model rather than its identifier, and produces the corresponding proof. Using Merkle hash tree and the FairSwap protocol can help save the computation overhead of this step. If this proof can pass the verification algorithm, the correctness of the delivered model is verified. For the prediction service, this step is more effective since the prediction results may have smaller size and can be hashed by the circuit directly.

Another potential malicious behavior is that after attaining the pre-defined termination condition $S$ may keep training to charge more. The use of VC to verify the performances of multiple epochs might too expensive. We concern that over-training also requires $S$ to execute the computation which causes low benefits, so appropriate billing policies which can motivate $S$ to finish a task as soon as possible can mitigate such this behavior. For example, if the training task is finished at a given time, $S$ will earn an additional bonus and a better reputation.

\begin{figure*}[!th]
\begin{minipage}[t]{0.49\columnwidth}
\centering
\includegraphics[width=1\columnwidth]{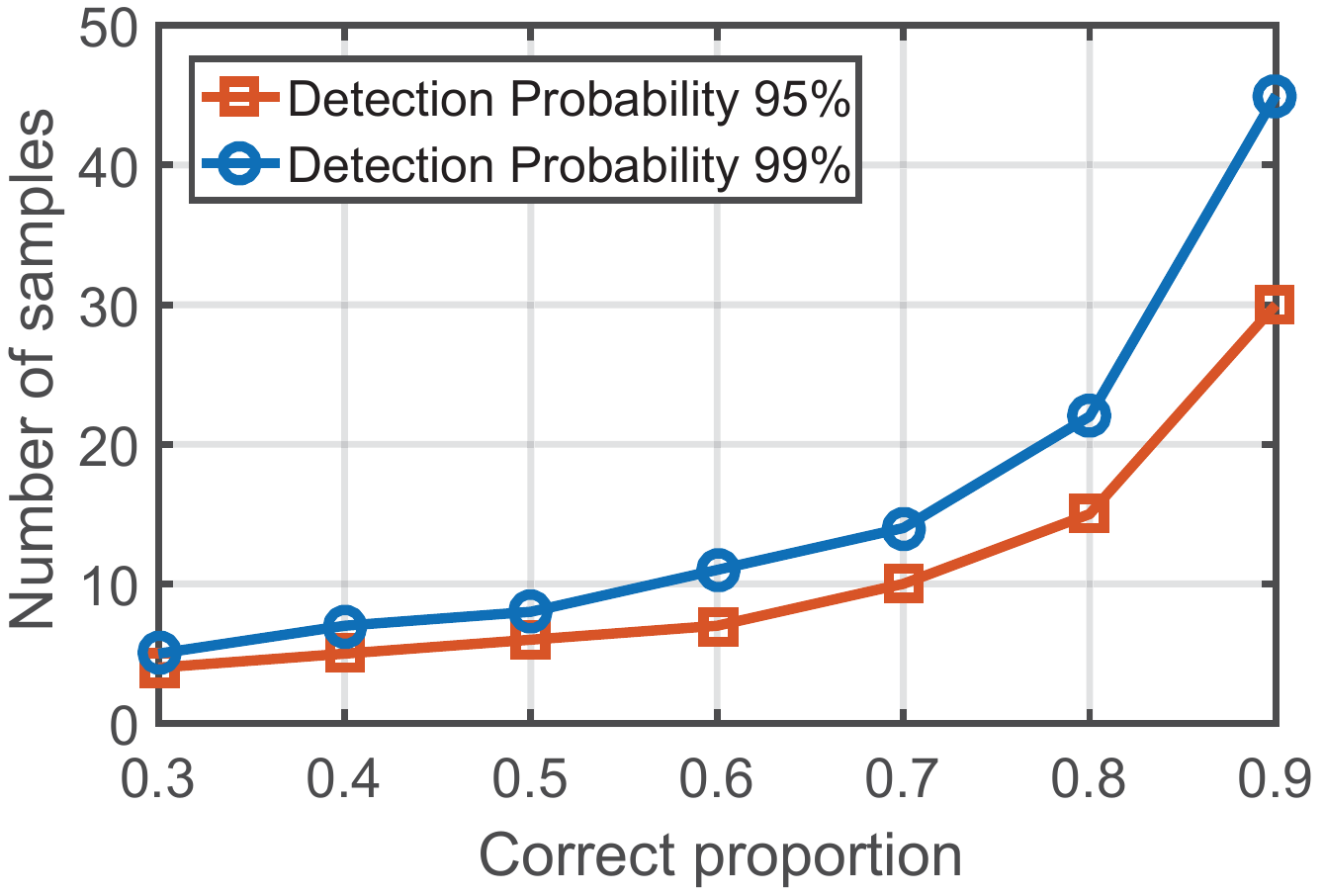}

\caption{Sampling strategy}
\label{fig:sample}
\end{minipage}
\begin{minipage}[t]{0.49\columnwidth}
\centering
\includegraphics[width=1\columnwidth]{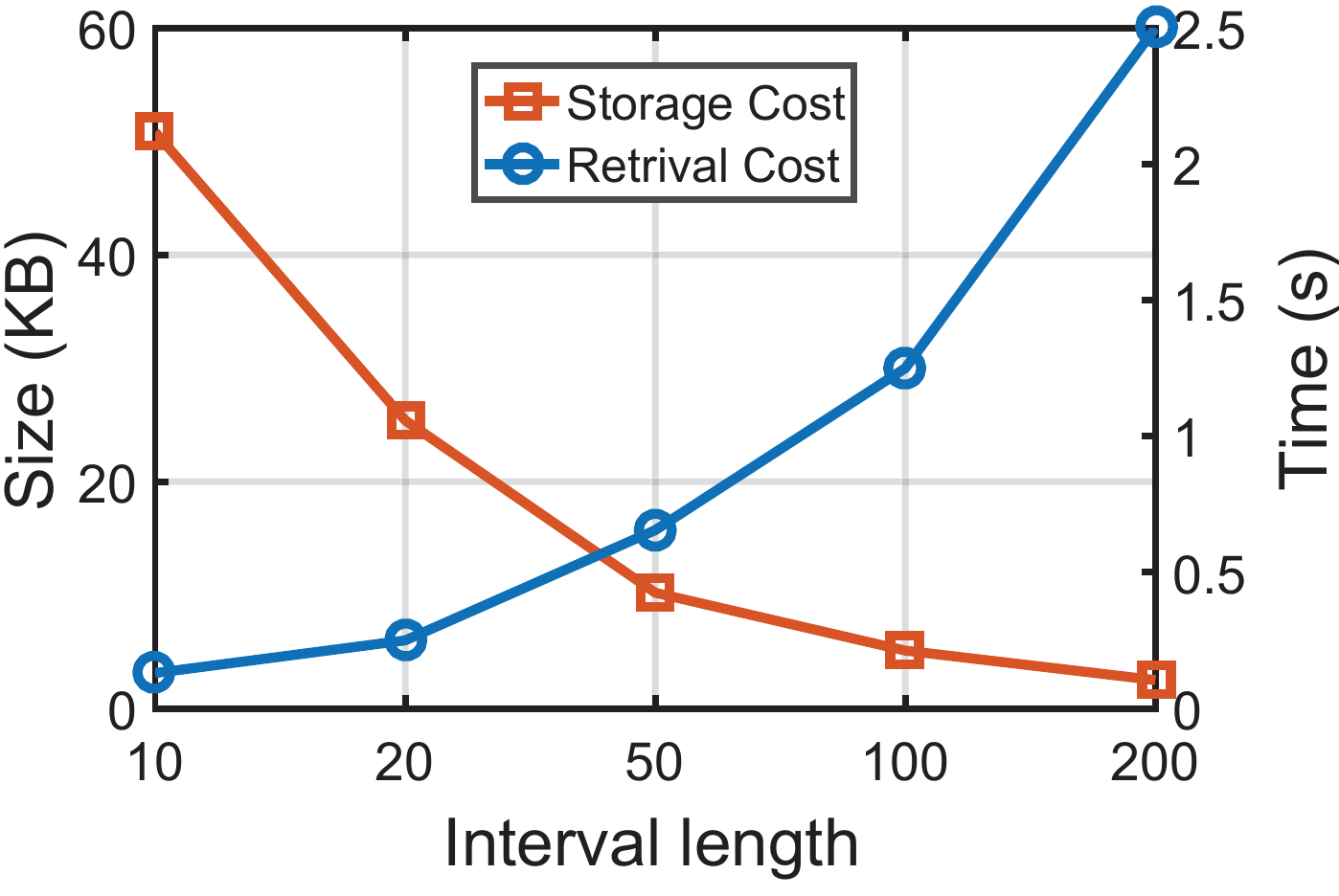}

\caption{Impact of interval length}
\label{fig:interval}
\end{minipage}
\begin{minipage}[t]{0.49\columnwidth}
\centering
\includegraphics[width=1\columnwidth]{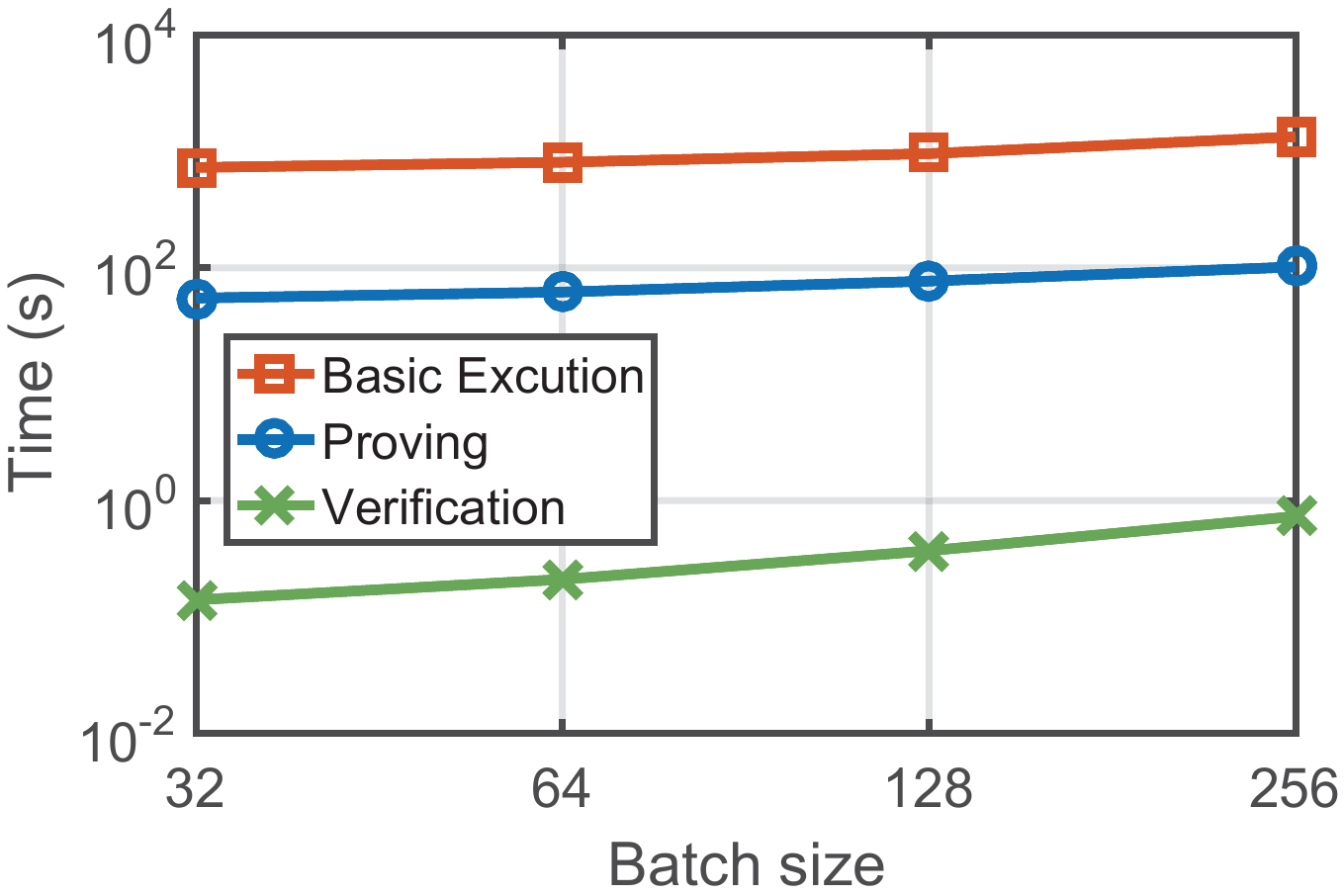}

\caption{Run-time of verifying linear regression}
\label{fig:time}
\end{minipage}
\begin{minipage}[t]{0.49\columnwidth}
\centering
\includegraphics[width=1\columnwidth]{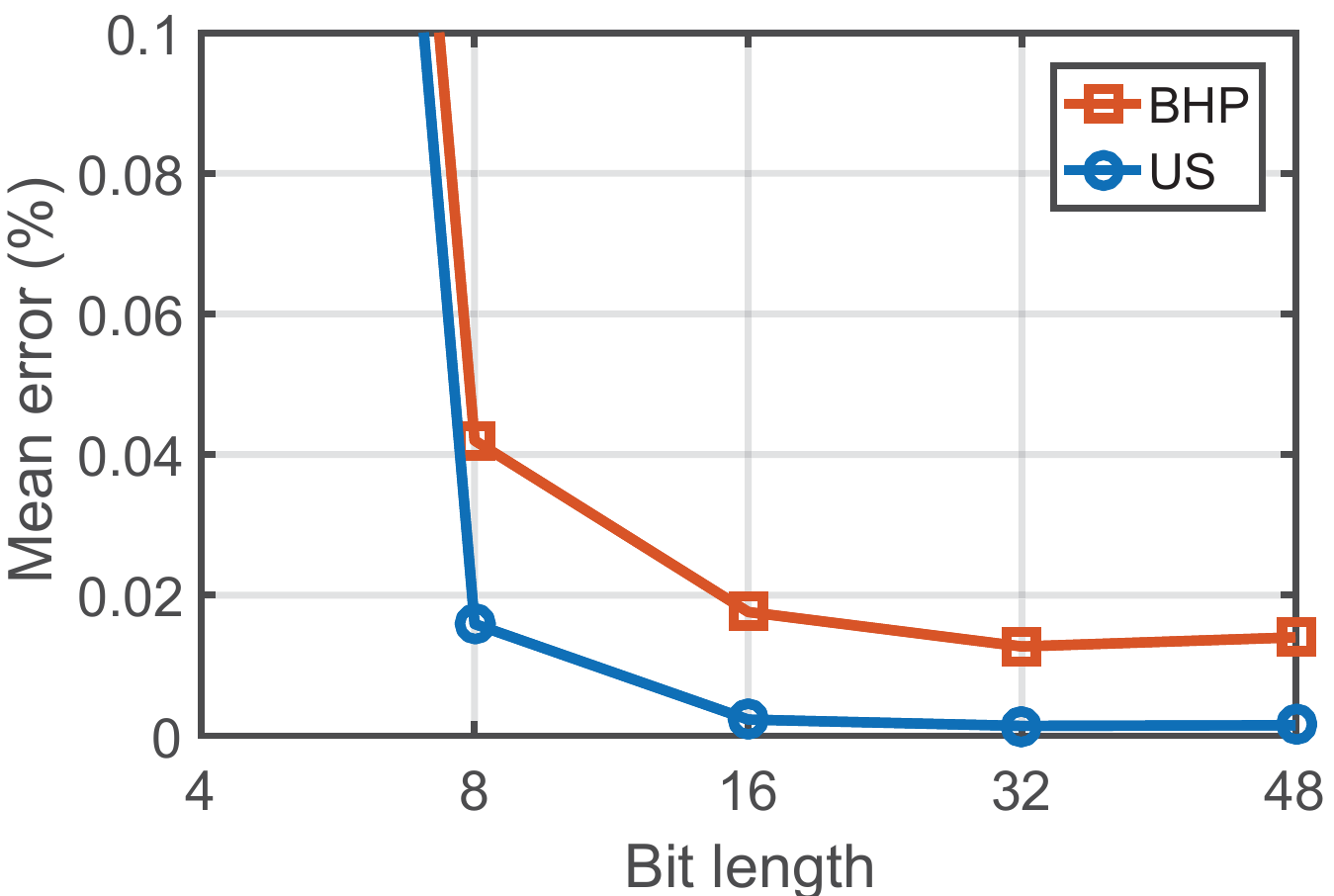}

\caption{Impact of bit length}
\label{fig:bit_loss}
\end{minipage}

\end{figure*}

\section{Performance Evaluation}\label{sec:eva}
In this section, we present the implementation and experimental results to show the performance of VeriML.

\subsection{Implementations}
\textbf{Setup.} Our system is implemented in Java. We use the jsnark compiler~\cite{jsnark} to produce the circuit and adopt libsnark~\cite{libsnark} as the backend. The server side which executes the proving part runs on a desktop with Ubuntu 16.04, Intel Xeon W-2133 CPU and 64GB RAM. The client side runs on a laptop with Ubuntu 16.04, Intel Core i5-4460S CPU and 16GB RAM to execute the generation and verification.

\textbf{Datasets.} The datasets we use are as follows. The Boston House Price dataset (BHP)~\cite{housing} contains 506 samples, 13 features, and the label is the house price. The banknote authentication (BA) dataset is extracted from images of genuine and forged banknote-like specimens with 4 features and a binary label~\cite{banknote}. The Nursery dataset~\cite{nursery} has 12,960 samples and 8 features (we bucket the features to 27 bins), and the label is the outcome of Slovenian nursery admission process. The dataset US~\cite{Minne} contains 600K census records with 15 features from the United States. MNIST has 60K images of handwritten digits, each with 784 features~\cite{mnist}.

In VeriML, we implemented six popular machine learning algorithms: linear regression, logistic regression, neural network, support vector machine, K-Means and decision tree. Concretely, we implemented decision tree based on the CART algorithm~\cite{breiman2017classification}, and we implemented a 4-layer fully-connected neural network which is the same as~\cite{mohassel2017secureml, mohassel2018aby} with two hidden layers (each hidden layer has 128 neurons) and the approximated sigmoid activation function, and the loss function is the cross entropy function. All the other algorithms are implemented based on the standard versions.

From the point of efficiency view, we calculated the running time and communication overhead by tuning the batch size, the main parameter which can affect the performance. As a comparison, we evaluate the time of the native execution by running the constructed circuit directly in jsnark, \ie, the time of only executing the essential computations. We do not choose to compare with existing ML packages, because the computation processes without fixed-point number representations are incompatible with VC techniques, and the plenty of optimization methods in these packages have not been implemented by circuits yet. Furthermore, we evaluate the effect of our sampling and checkpoint strategy. From the accuracy point of view, we evaluated the effect of fixed-point approximation, and compared all the proposed approximated activation functions with benchmarks and existing methods on 4 real-world datasets.

\subsection{Computation Overhead}
Because our design is based on sampling which randomly chooses multiple iterations for verification, we first present the impact of the sampling strategy, and then discuss the usability of our system in practice. We assume that both the server and the client store the dataset and compile the circuits in advance.

Since the proportion of forged iterations is unknown to the client, the number of sampled iterations is determined by the client's expected confidence. Figure~\ref{fig:sample} presents the needed samples for different proportions of the correct-executed iterations. The x-axis is the expected correct proportion of all iterations, and the y-axis is the number of challenges. When the proportion of forged iterations is large, verifying a few iterations will enable the client to detect whether the commitment is forged. With the increasing proportion of genuine iterations, the required number of challenges will rapidly increase and cause high latency.

In practice, the server has incentives to forge a lot more iterations to gain large enough profits, which helps to introduce an important observation: the number of challenges is affected by the proportion of genuine iterations, but, insensitive to the total number of claimed iterations (because in Equation~(\ref{equ:sample}), $N$ is much larger than $c$), thus around 10 to 15 challenges are appropriate.

Table~\ref{table:all} shows the costs of each verification in all the implemented algorithms. Obviously, to make the ML service available in practice, the system needs to ensure that the verification overhead on the client side is less than executing the training locally. If the claimed number of iterations is $N$, the interval length is $m$, the number of challenges is $c$, the running time of $\mathsf{KeyGen}$ is $t_k$, and the proving time, verifying and executing one iteration are $t_p$, $t_v$ and $t_e$ respectively. Because the overhead of executing $\mathsf{KeyGen}$ is one-off for each task, we have $t_k + ct_v < Nt_e$. According to Table~\ref{table:all}, VeriML is practicable when the task has hundreds of iterations.

The costs of SNARK grows linearly with the circuit size, \ie, the number of inputs and multiplications of the circuit. Therefore, batch size is a major factor affecting the overhead. Table~\ref{table:all} shows that the overhead grows linearly with the batch size. Note that training decision tree does not require to set the batch size, so we omit the discussion here.

On the server side, to make the system as economical as possible, we also expect to prove that the overhead is less than executing the training locally. So we have $\frac{m}{2}ct_e + ct_p < Nt_e$, \ie, the total time of generating proofs and retrieving the model's states is less than that of training. For the implemented algorithms, VeriML is economic when the task has thousands of iterations.

The latency is approximately equivalent to the total time of retrieving states, producing and verifying proofs, which can be written as $\frac{m}{2}ct_e + c(t_p + t_v)$. Using the linear regression task as an example, we can see from Table~\ref{table:all} that the time of retrieving one state is estimated to be 0.9s while setting the interval length is 50, and the whole checkpointing scheme requires about 11.7s with 13 challenges. The time costs of proving and verification are 6.5s and 0.1s respectively. Furthermore, the server has additional storage costs in retrieving the model's state from checkpoints. If the bit length of parameters is $l$, and the model's dimension is $d$, the storage cost can be calculated as $ldN/m$ bits. If the total number of iterations is 10K, the storage cost is 10.2KB. Figure~\ref{fig:interval} shows the effects of interval length on the storage overhead and retrieving time. We can see that a bit length of 50 is an appropriate balance of computation and storage costs.
\begin{table*}[t!]
\centering
\begin{tabu}{c|c|c||c|c|c|c|c|c}
  \tabucline[1pt]{-}
Algorithm & Dataset & Dimension &  $\mathsf{KeyGen}$ (s) & $\mathsf{Prove}$ (s) & $\mathsf{Verify}$ (ms) & $\mathsf{Native}$ (ms) & EK (MB) & VK (MB)  \\ \hline \hline

Linear Regression & BHP & 13 & 0.17 & 0.08 & 6.2 & 15 & 0.2 & 0.02 \\
\hline
Logistic Regression  & US & 15 & 0.2 & 0.1 & 6.2 & 21 & 0.2 & 0.02 \\
\hline
NN  & MNIST & 784& 22.9 & 11.7 & 186.7 & 408 & 39.8 & 2.3 \\
\hline
SVM & US & 15 & 0.2 & 0.1 & 7.7 & 21 & 0.2 & 0.02  \\
\hline
K-Means & MNIST & 784 & 38.6 & 23.8 & 123.3 & 356 & 93.9 & 1.6 \\
\hline
Decision Tree & Nursery & 8 & 0.2 & 0.04 & 5.0 & 6 & 0.26 & 0.01   \\
  \tabucline[1pt]{-}
\end{tabu}
\caption{Verification costs of computing predictions for implemented algorithms}
\label{table:pred_all}
\end{table*}

\begin{table}[t]
    \vspace{2mm}
  \centering
  \begin{tabu}{c|c ccc}
    \tabucline[1pt]{-}
       {Dataset} & Sigmoid & Taylor & Remez & Piecewise  \\
      \hline
      \hline
    BN & $\mathbf{73.41}$ & 72.60 & $\mathbf{73.24}$ & 73.29  \\
    US & $\mathbf{87.81}$ & 85.11 & $\mathbf{86.17}$ & 85.84 \\

    MNIST & $\mathbf{95.49}$ & 87.82 & $\mathbf{95.58}$ & 96.15  \\
     \tabucline[1pt]{-}
  \end{tabu}
  \caption{Accuracy of approximated sigmoid}
  \label{tab:sigmoid}
  \end{table}
Figure~\ref{fig:time} shows the runtime of verifying the whole training process using linear regression as a concrete example. Here we assume that the claimed number of iterations is 100K, the dimension is 13, the batch size is 128, the interval length is 50, and the proportion of genuine iterations is 70\%. The server can prove the claimed workload by spending about 2.2\% of the native execution time, and the cost of verification by the client is far less than the re-execution of the training task.

Naturally, if the learning task only takes a few iterations, the cost and latency of verification may not be economic for the server and the client. But if the convergence speed is slow, \ie, the number of iterations is large, the verification is worthwhile.

Table~\ref{table:pred_all} shows the performance of verifying making predictions over batch size as 64 for all the algorithms. Compared with Table~\ref{table:all}, computing predictions is much cheaper than one iteration in training. This is due to: (1) the part of hashing is removed from the circuit, (2) it does not need to add the additional random coefficient vector, and (3) some steps are no longer needed in making predictions, such as the backward propagation part.

\subsection{Communication Overhead}
The communication overhead lies in transmitting the proofs and the evaluation keys.

For the first part, because the sizes of each proof and each identifier are constant (288 bytes and 256 bits respectively), the communication overhead is only related to the number of iterations and samples, and will be affected by the kind of the algorithm. For example, with 10K iterations and 15 proofs, it only costs about 317KB. The communication cost of the prediction service is constant because only one proof is required, \ie, 288 bytes without the delivery of prediction results.

For the second part, the size of $EK_F$ is linear with the number of inputs and multiplications of the circuit. Table~\ref{table:all} shows the sizes of the keys for different algorithms. The size of $EK_F$ may be much larger than those of commitments and proofs. Fortunately, it can be reused for all samples in each task and transmitted in parallel during the training. Therefore, it will not affect the performance of verification.

Furthermore, the cost of executing a smart contract to exchange the key is also cheap. In Ethreum, evaluating one hash function spends 27,265 gas, which can be translated to only 0.000214 Ether or 0.047 USD for an exchange rate of 220 USD/Ether.

\subsection{Accuracy Loss}
The accuracy loss in our system can be analyzed from two aspects: the fixed-point representation of rational numbers, and approximate functions.

In Figure~\ref{fig:bit_loss}, we use linear regression as an example to show the impact of the fixed-point representation. The learning rate is set as 0.05 and the number of iterations is 10K. While varying the bit length between 4 and 48, the mean error of the trained model decreases rapidly. We can observe that a 32-bit representation has no obvious accuracy loss. In our implementations, linear regression, SVM and K-Means do not include the approximate functions, so their accuracy losses can be ignored. Furthermore, the prior work~\cite{ke2017lightgbm} shows that using histogram will not affect the accuracies of decision trees.

For implementing logistic regression and neural network, we introduce the approximated sigmoid function as the activation function. Table~\ref{tab:sigmoid} shows that the Remez method has better performance than the Taylor extension and piecewise methods.

Recitified Linear Unit (ReLU) $f(x)=\max(0, x)$ is another popular activation function in neural networks. Applying ReLU in arithmetic circuits also incurs a huge computation cost similar to sigmoid functions. If one layer uses ReLU as the activation function, it needs to execute $bn_i$ comparisons, which are even more expensive than verifying the matrix multiplication. Thus, the square function $f(x)=x^2$ might be suitable for replacing ReLU in VeriML~\cite{gilad2016cryptonets}.
Prior results show that using the square function to replace ReLU can achieve satisfactory performance (99\% accuracy)~\cite{gilad2016cryptonets, ghodsi2017safetynets}. However, using it to approximate softmax fails with the fixed learning rate. But we notice that it still performs well (93.09\% accuracy) when we use the Adam Optimizer to adjust the learning rate.

\section{Conclusion}
In this paper, we presented the design, implementation and evaluation of VeriML, a verifiable and fair outsourced machine learning service. We transformed machine learning algorithms to quadratic arithmetic circuits to generate proofs, then designed a new commit-and-prove protocol to detect misbehaviors during the training process with high probability. Blockchain is leveraged to serve as a decentralized trusted third party to achieve fair exchange by a hash-locked transaction. Our experiments on real-world datasets validate that the computation and communication costs of VeriML are practical, and it can be readily applied to the existing MLaaS platforms.

There are some important future directions. First of all, the efficiency of existing SNARK implementations can be further improved. The underlying VC protocol can be replaced by other similar building blocks which have the zero-knowledge property. Moreover, we will seek for better approximations of non-linear functions, especially for the diversified non-linear functions used in many other machine learning algorithms.

\bibliographystyle{IEEEtran}
\bibliography{reference}

\end{document}